\newcommand{\RSet}{\mathbb{R}}
\newcommand{\ball}[2]{B_{#2}(#1)}
\newcommand{\orbit}[1]{\mathcal{O}_{#1}}
\newcommand{\distrate}{\varrho}
\newcommand{\SE}[1][3]{\mathrm{SE}(#1)}
\newcommand{\GA}[1][3]{\mathrm{GA}$+$(#1)}
\newcommand{\IP}[2]{\langle #1, #2 \rangle}
\newcommand{\Norm}[1]{\lVert #1 \rVert}
\newcommand*{\defeq}{\coloneqq}
\newcommand{\metdata}{\nu} 
\newcommand{\ballradius}{k}
\DeclareMathOperator{\dist}{dist}
\newtheorem{lemma}{Lemma}
\newtheorem{proposition}{Proposition}
\theoremstyle{remark}
\newtheorem{remark}{Remark}
\title{Guaranteed Collision Detection\\With Toleranced Motions}
\author{Hans-Peter Schröcker \quad Matthias J. Weber\\
  Unit Geometry and CAD\\
  University Innsbruck}
\begin{document}

\maketitle

\begin{abstract}
  We present a method for guaranteed collision detection with
  toleranced motions. The basic idea is to consider the motion as a
  curve in the 12-dimensional space of affine displacements, endowed
  with an object-oriented Euclidean metric, and cover it with balls.
  The associated orbits of points, lines, planes and polygons have
  particularly simple shapes that lend themselves well to exact and
  fast collision queries. We present formulas for elementary collision
  tests with these orbit shapes and we suggest an algorithm, based on
  motion subdivision and computation of bounding balls, that can give
  a no-collision guarantee. It allows a robust and efficient
  implementation and parallelization. At hand of several examples
  we explore the asymptotic behavior of the algorithm and compare
  different implementation strategies.


\end{abstract}

\par\noindent
Keywords: Toleranced motion, collision detection, bounding ball,
bounding volumes.

\par\noindent
2010 MSC: 65D18, 70B10



\section{Introduction}
\label{sec:introduction}

Collision detection between moving solids \cite{ericson05} is an
ubiquitous issue in robotics and other fields, such as computer
graphics or physics simulation. General-purpose strategies are
available as well as algorithms tailored for specific situations. In
general, any algorithm is a trade-off between reliability, accuracy,
and computation time. In some scenarios, for example in trajectory
planning for high-speed heavy robots, collisions might have
devastating effects. Therefore, it is necessary to exclude them with
certainty. This article presents an efficient algorithm that can give
such a guarantee for collisions between polyhedra under certain
assumptions on the motion.

Typical techniques of collision detection include time discretization
\cite[Section~2.4]{ericson05}, motion linearization, and the use of
bounding volumes \cite[Chapter~4]{ericson05} or hierarchies of
bounding volumes \cite[Chapter~6]{ericson05} and
\cite{hubbard95,dingliana00}. Careful application of discretization
and linearization is critical, as collisions with a short time
interval of interference may be missed. The algorithm we propose in
this article uses hierarchies of bounding volumes but it exhibits some
specialties with respect to their construction. When using bounding
volumes, several potentially contradicting requirements have to be
fulfilled \cite[p.~76]{ericson05}: The computation of the bounding
volumes must be efficient, the representation of moving and fixed
object by not too many bounding volumes has to be accurate, and the
collision queries for bounding volumes should be fast.

In this article, we present ideas for a collision detection algorithm
that satisfies these criteria, at least to a certain extent. The
novelty is to use a ball covering of the motion which we regard as a
curve in the 12-dimensional space of affine displacements. This ball
covering can be converted into a system of simple bounding volumes of
elementary objects (points, lines, planes, and polygons). Unlike other
methods for collision detection based on bounding balls, this requires
only the covering of a one-dimensional object (the motion) as opposed
to the covering of a 3D-solid. The bounding volumes are constructed
from the geometry of the moving object \emph{and} its motion and thus
accurately represent the portion of space occupied by the object as it
moves. It is possible to refine the bounding volumes by subdividing
the motion. With every subdivision step, the number of bounding
volumes doubles while their volume shrinks roughly by the factor
$2^{-3}$. This is much better than for ball coverings of non-curve
like objects in 3D. Furthermore, the worst case complexity of
subdivision is usually not achieved because large parts of motion or
moving objects can be pruned away at a low recursion level.

We expect our algorithm to work particularly well for objects that
lend themselves to polyhedral approximations. Other desirable
properties include:
\begin{itemize}
\item The possibility to report approximate time and
location of potential collisions,
\item simple, fast and robust implementation because of recursive
  function calls without the need of sophisticated data structures,
\item the use of fast and robust primitive procedures which are
  well-known in computational geometry, and
\item the possibility of parallelization.
\end{itemize}
Assuming a robust numerical implementation, collisions up to a
pre-defined accuracy will always be detected correctly and reported
non-collisions are guaranteed.

The ball covering in the space of affine displacements is based on an
underlying object-oriented Euclidean metric
\cite{belta02,chirikjian98} which is derived from a mass distribution
in the space of the moving object. This mass distribution already saw
applications in computational kinematics, for example
\cite{hofer04:_motion_design,nawratil07,schroecker09:_evolving_four_bars}.
One of its important properties, the simplicity of certain orbit
shapes with respect to balls of affine displacements, was observed in
\cite{schroecker05}. It is our aim, to exploit this for guaranteed
collision detection under certain not too restrictive assumptions on
the motion.

Since collision detection is a well-explored research topic, we want
to compare the algorithm we propose with other approaches known from
literature.

There is a number of techniques for detecting collisions or computing
the distance of moving objects by means of covering balls or
hierarchies of covering balls
\cite{quinlan94,hubbard95,greenspan96,martinez98,dingliana00,weller09,weller09b}.
We cover the moving objects by different bounding volumes (portions of
balls and hyperboloids of one or two sheets). However, our
construction is based on a covering of a curve in $\RSet^{12}$ (the
motion) by balls. This latter problem is typically simpler and of
lesser complexity than constructing ball coverings for 3D solids.
Depending on the particular situation, certain curve specifics, for
example piecewise rationality, can be used to efficiently generate
systems of bounding balls. In contrast to collision detection with
covering balls, our algorithm typically requires exactly one quadratic
bounding surface (sphere or hyperboloid) per vertex, edge, and face.
As in case of bounding spheres, efficient collision queries are
possible by evaluation of quadratic functions only.

Several authors suggest bounding volume shapes which are specifically
tailored to the needs of collision detection. Typical examples are the
generalized cylinders of \cite{martinez98} and the s-topes of
\cite{bernabeu01}. Usually, they are derived from some ball covering.
An s-tope is, for example, the convex hull of a finite set of spheres.
These shapes visually resemble our bounding volumes. The crucial
difference is that our construction takes into account the objects'
motion in such a way that it is guaranteed to stay within the volume
\emph{during the whole motion.} In this sense, it resembles the
space-time solids of \cite{hubbard95} or the implementation for jammed
random packing of ellipsoids described in \cite{donev05,donev05b}. A
non-collision guarantee is often possible by means of a coarse
representation. Only if a collision cannot be excluded, the bounding
volumes are adaptively refined. The prize we pay for this exact
collision detection with arbitrary accuracy is the necessity to
re-compute the bounding volumes if the motion changes.

It has to be emphasized that our approach requires no
time-discretization or motion linearization. We work with the motion
as it is analytically defined over a parameter interval. Moreover, our
algorithm should not be confused with algorithms for (approximate)
distance calculation although it can report whether or not the
distance falls beyond a certain threshold during a given motion.
Naturally, our algorithm is considerably slower than many of the
sophisticated existing algorithms but its outcome is of a high
quality. A reported no-collision is guaranteed. Our algorithm should
only be used when such a certification is crucial.

The remainder of this article is organized as follows. After
presenting some preliminaries about the object-oriented metric in
\autoref{sec:distance}, we describe the orbits of geometric primitives
in \autoref{sec:orbits} and their basic collision tests in
\autoref{sec:collision-tests}. In \autoref{sec:bounding-balls} we
present some ideas for constructing bounding balls for motions
(curves in high-dimensional Euclidean spaces) and in
\autoref{sec:collision-detection} we synthesize everything via motion
subdivision into a collision detection algorithm.
\autoref{sec:examples} presents examples and timing results.

\section{Distance between affine displacements}
\label{sec:distance}

Measuring the ``distance'' between two displacements is a fundamental
problem of computational kinematics with many important applications.
It is also afflicted with irresolvable defects such as the
incompatibility of units in angle and lengths and dependence on the
chosen coordinate frames. For our algorithm, we require a Euclidean
metric in the space of Euclidean displacements. Following
\cite{belta02,chirikjian98}, we equip the space $\GA$ of affine
displacements with a left-invariant object-oriented Euclidean metric
and embed the Euclidean motion group $\SE$ into $\GA$. After
identification of $\GA$ with $\RSet^{12}$, the Euclidean motion group
is a sub-variety of dimension six in this space. The distance concept
is based on a positive Borel measure (mass distribution) $\mu$, which
defines the scalar product
\begin{equation}
  \label{eq:1}
  \IP{\alpha}{\beta} = \int \IP{\alpha(x)}{\beta(x)}\;\mathrm{d}\mu
\end{equation}
and the squared distance
\begin{equation}
  \label{eq:2}
  \dist^2(\alpha,\beta) =
  \IP{\alpha-\beta}{\alpha-\beta}
\end{equation}
for any $\alpha,\beta \in \GA$. In practice, the mass distribution is
usually given by a number of feature points $m_1,\ldots,m_n \in
\RSet^3$ with positive weights $w_1,\ldots,w_n \in \RSet_+$ whence the
scalar product \eqref{eq:1} and squared distance \eqref{eq:2} become
\begin{equation*}
  \IP{\alpha}{\beta} = \sum_{i=1}^n w_i \IP{\alpha(m_i)}{\beta(m_i)}
  \quad\text{and}\quad
  \dist^2(\alpha,\beta) = \sum_{i=1}^n w_i \Vert \alpha(m_i) - \beta(m_i)\Vert^2.
\end{equation*}
On the right-hand sides, we use the usual scalar product and norm in
Euclidean three-space $\RSet^3$. In this way, the space $\GA$ is
endowed with a Euclidean metric and it makes sense to speak of ``balls
of affine displacements''.

As usual with distance measures for displacements, there is a degree
of arbitrariness in this definition. However, the moving object will
often suggest a mass distribution in a natural way and we expect that
the algorithm's performance will only marginally depend on the choice
of feature points. Both, angle and distance, depend only on the
barycenter $b$ of the mass distribution and its inertia matrix $J =
(j_{kl})_{k,l \in \{1,2,3\}}$. These
are defined via
\begin{equation}
  \label{eq:3}
  b := \vert \mu \vert^{-1} \int x \;\mathrm{d}\mu,\quad
  j_{kl} := \int x_kx_l \;\mathrm{d}\mu
\end{equation}
where $\vert\mu\vert = \int \mathrm{d}\mu$ is the total mass and $x_k$
is the $k$-th coordinate of $x \in \RSet^3$. In the discrete case,
these formulas become
\begin{equation*}
  \vert\mu\vert = \sum_{i=1}^n w_i,\quad
  b = \vert\mu\vert^{-1} \sum_{i=1}^n w_im_i,\quad
  J = \sum_{i=1}^n w_i^2 m_i m_i^T.
\end{equation*}
The feature points $m_i$ and weights $w_i$ as well as the mass
distribution $\mu$ can always be replaced by equally weighted vertices
of the ellipsoid with equation $(x-b)^T \cdot J \cdot (x-b) = 0$
without changing the metric. This is also true in the continuous case.
In spite of the general definition of inner product and distance in
$\RSet^{12}$, their computation is \emph{always} very efficient.

\section{Orbits of points, lines, planes and polygons}
\label{sec:orbits}

The orbit of a point set $X \subset \RSet^3$ with respect to a set $Y
\subset \RSet^{12}$ of affine displacements is the point set
$\orbit{Y}(X) := \{ \alpha(x) \mid x \in X,\ \alpha \in Y\}$. The
orbits of points, lines, and planes (``fat'' points, lines, and
planes) with respect to a ball $B = \ball{g}{R} \subset \RSet^{12}$
(center $g$, radius $R$) of affine displacements have been computed in
\cite[Section~4]{schroecker05}. When summarizing the results from that
article we always assume that $g$ is the identity. If this is not the
case, the orbit shapes have to be displaced by means of~$g$.

The orbit of a point $x=(x_1,x_2,x_3)$ with respect to
$B$ is a ball $\ball{x}{r}$. Its radius $r$ depends on
the radius $R$ of $B$, the point $x$, and the metric in $\RSet^{12}$.
It is given by the formula
\begin{equation}
  \label{eq:4}
  \distrate^2(x) \defeq
  \frac{r^2}{R^2} =
  \frac{1}{\vert \mu \vert} + \sum_{i=1}^3 \frac{x_i^2}{\mu_i}
\end{equation}
where $\mu_1,\mu_2,\mu_3$ are the eigenvalues of the inertia matrix
\eqref{eq:3}. We call the function $\varrho(x)$ defined by
\eqref{eq:4} the \emph{distortion rate} at $x$. It is constant on
ellipsoids with axes parallel to the eigenvectors of the inertia
matrix and attains its minimum at the barycenter of the mass
distribution. This observation helps to picture the distribution of
distortion rate in space.

For sufficiently small $R$, the orbit of a line $X$ with respect to
$B$ is bounded by a one-sheeted hyperboloid of revolution $\Phi$ with
axis $X$ and the orbit of a plane $\xi$ is bounded by a two sheeted
hyperboloid $\Psi$ with plane of symmetry $\xi$. We consider the orbit
of the line $X$ at first. There exist a parameterization $x(t) = a +
tb$, $\Vert b \Vert = 1$, $t \in \RSet$ of $X$ such that
\begin{equation}
  \label{eq:5}
  \distrate^2(a+tb) = \varrho_0^2 + \frac{t^2}{\metdata^2}.
\end{equation}
We call the real values $\varrho_0$ and $\metdata$ the \emph{metric
  data of $X$.} Denote by $(x,y,z)$ coordinates in an orthonormal
frame with origin $a$ and first basis vector $b$. Then the equation of
$\Phi$ reads
\begin{equation}
  \label{eq:6}
  \Phi\colon \frac{x^2}{1-(\metdata/R)^2} + y^2 + z^2 = R^2\varrho_0^2.
\end{equation}
This describes a hyperboloid of revolution if $0 < R < \metdata$.

Similarly, the plane $\xi$ admits a parameterization $x(s,t) = a + sb_1
+ tb_2$ with orthonormal vectors $b_1$, $b_2$ such that
\begin{equation}
  \label{eq:7}
  \distrate^2(a+sb_1+tb_2) =
  \varrho_0^2 + \frac{s^2}{\metdata_1^2} + \frac{t^2}{\metdata_2^2}.
\end{equation}
The real values $\varrho_0$, $\metdata_1$, and $\metdata_2$ are the
plane's \emph{metric data.} In the orthonormal frame with origin $a$ and
vector basis $(b_1,b_2,b_1 \times b_2)$, the equation of $\Psi$
reads
\begin{equation}
  \label{eq:8}
  \Psi\colon \frac{x^2}{1-(\metdata_1/R)^2} +
             \frac{y^2}{1-(\metdata_2/R)^2} +
             z^2 = R^2\varrho_0^2.
\end{equation}
This is a hyperboloid for $0 < R < \min\{\metdata_1,\metdata_2\}$.

Now we consider the orbit $\orbit{P}$ of a simple polygon $P$ with
respect to the ball $B$ of affine displacements (``fat polygon''). For
sufficiently small radius $R$, it is a solid, bounded by patches of
spheres (vertex orbits), one-sheeted hyperboloids of revolution (edge
orbits), and a two sheeted hyperboloid $\Psi$ (face orbit). In order
to find the boundary curves of these surfaces patches, we consider the
balls $\{ \ball{x}{r} \mid x \in P \}$. Their envelope is precisely
$\orbit{P}$. Thus, the patch on $\Psi$ that actually contributes to
the boundary of $\orbit{P}$ is the projection of $P$ onto $\Psi$ via
the surface normals. Any point $x$ in the polygon's plane $\xi$ gives
rise to two projected images on $\Psi$ which are symmetric with
respect to $\xi$. The patch boundaries on $\Psi$ are the projection of
the polygon edges. The following proposition completely describes the
projected image of~$P$. Its proof is a straightforward calculation.

\begin{proposition}
  \label{prop:1}
  Denote by $\sigma$ the non-uniform scaling
  \begin{equation}
    \label{eq:9}
    (x,y,0) \mapsto \Bigl( \frac{a^2x}{a^2+c^2}, \frac{b^2y}{b^2+c^2}, 0 \Bigr).
  \end{equation}
  The projection of a polygon $P$ in the plane $z = 0$ onto the
  hyperboloid
  \begin{equation}
    \label{eq:10}
    \Psi\colon \frac{x^2}{a^2} + \frac{y^2}{b^2} - \frac{z^2}{c^2} + 1 = 0
  \end{equation}
  via the surface normals consists of those points $(x,y,z)$ whose
  orthographic projections $(x,y,0)$ into $z = 0$ lie in~$\sigma(P)$.
\end{proposition}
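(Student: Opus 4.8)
The plan is to turn the phrase ``projection via surface normals'' into a concrete pointwise condition: a point $p=(x,y,z)\in\Psi$ belongs to the projected image of $P$ exactly when the surface normal of $\Psi$ at $p$ meets the polygon $P$ in the plane $z=0$. So the first step is, for a given $p\in\Psi$, to locate the foot point where the normal line through $p$ pierces $z=0$ and to write that foot point explicitly as a function of $(x,y,z)$. This reduces the proposition to comparing two maps of the plane $z=0$ to itself.

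To carry this out I would write $\Psi$ as the zero set of $F(x,y,z)=x^2/a^2+y^2/b^2-z^2/c^2+1$ and read the normal direction off the gradient $\nabla F=(2x/a^2,\,2y/b^2,\,-2z/c^2)$. Parametrizing the normal line through $p$ and requiring its third coordinate to vanish gives a single linear equation in the line parameter; since every point of the two-sheeted hyperboloid satisfies $\lvert z\rvert\ge c>0$, this equation has a unique solution, and the same value works on both sheets. The resulting foot-point map is
\[
  \pi(x,y,z)=\Bigl(\tfrac{a^2+c^2}{a^2}\,x,\ \tfrac{b^2+c^2}{b^2}\,y,\ 0\Bigr).
\]

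The key observation, which is what makes the statement clean, is that the given scaling $\sigma$ is precisely the inverse of $\pi$ in the plane coordinates: composing them yields $\sigma\bigl(\pi(x,y,z)\bigr)=(x,y,0)$, the orthographic projection of $p$ into $z=0$. Because $\sigma$ is a non-uniform scaling with nonzero factors $a^2/(a^2+c^2)$ and $b^2/(b^2+c^2)$, it is a bijection of the plane $z=0$, so $q\in P\iff\sigma(q)\in\sigma(P)$. Applying this with $q=\pi(p)$ gives the desired chain of equivalences: $p$ lies in the projected image of $P$ iff $\pi(p)\in P$ iff $(x,y,0)=\sigma(\pi(p))\in\sigma(P)$, which is exactly the claim.

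The computation itself is routine, as the paper indicates. The only points I would take care over are confirming that the normal line meets $z=0$ in a unique point for every $p\in\Psi$ (guaranteed by $z\neq 0$, so no point of $\Psi$ has a normal parallel to the plane) and checking that $\sigma$ is genuinely invertible so that set membership passes through the scaling in both directions. Beyond keeping the scaling factors straight, I expect no real obstacle.
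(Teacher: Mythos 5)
Your argument is correct and is precisely the ``straightforward calculation'' the paper alludes to without writing out: the normal line at $(x,y,z)\in\Psi$ meets $z=0$ at $\bigl(\tfrac{(a^2+c^2)x}{a^2},\tfrac{(b^2+c^2)y}{b^2},0\bigr)$, and $\sigma$ is exactly the inverse of that foot-point map, so membership transfers as you describe. Your added care about $|z|\ge c$ (hence a unique, well-defined foot point on both sheets) and the invertibility of $\sigma$ is appropriate and closes the argument.
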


Proposition~\ref{prop:1} shows that the boundaries between the
individual surface patches on the fat polygon $\orbit{P}$ are circular and
hyperbolic arcs in planes which are ``vertical'' over $\xi$. This is
illustrated in Figure~\ref{fig:fat-polygon}. The left-hand drawing
shows the original polygon in thick line-style, its scaled copy (the
orthographic back-projection of the patch boundaries) and the relevant
circular and hyperbolic arcs.

\begin{figure}[t]
  \centering
  \begin{minipage}{0.4\linewidth}
    \centering
    \includegraphics{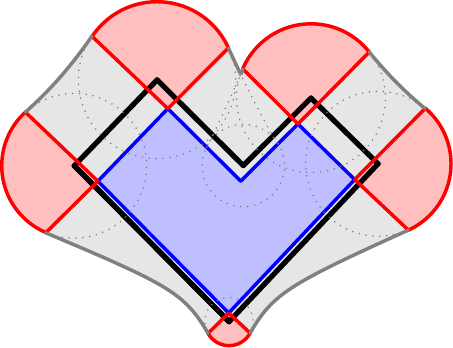}
  \end{minipage}
  \quad
  \begin{minipage}{0.45\linewidth}
    \centering
    \includegraphics{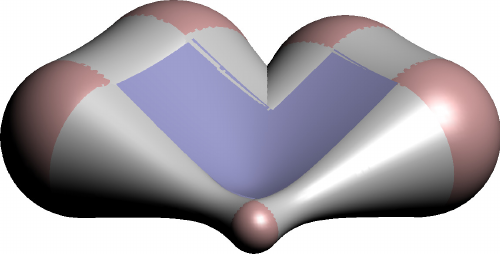}
  \end{minipage}
  \caption{Orbit boundaries for a fat polygon}
  \label{fig:fat-polygon}
\end{figure}

\section{Basic collision tests}
\label{sec:collision-tests}

The orbits of points, edges and faces with respect to balls of affine
displacements are solids bounded by pieces of spheres, hyperboloids of
revolutions and hyperboloids of two sheets. In particular, their
boundary is composed of piecewise quadratic surfaces and allows to
efficiently test for the basic collisions moving vertex vs. fixed
face, moving edge vs. fixed edge and moving face vs. fixed vertex
(\cite[Section~9.1]{ericson05} and \cite{moore88}). Note that we need
not care about ``degenerate'' collisions (for example edge contained
in face) because we use fat points, edges and faces. It is important
that the basic collision tests are implemented efficiently and
reliably, and that numeric issues must not result in an invalid
``no-collision'' response. Due to their simple nature, this is possible.

\subsection*{Moving vertex intersects fixed face}

Via tolerancing, the moving vertices are converted into balls so that
we actually have well-known sphere vs. polygon collision queries
\cite[Section 5.2.8]{ericson05}. These are especially efficient and
simple to implement, if the fixed polygon $F$ is convex.

\subsection*{Moving edge intersects fixed edge}

Here we have to test whether a line segment intersects a hyperboloid
of revolution between a pair of parallel circles. Note that we do not
have to test collision of the line segment with the bounding balls at
either end of the moving edge because these are already handled by the
preceding test. The intersection problem can be converted into a test
whether a quadratic polynomial has zeros in a given interval: Denote
by $a$ and $b$ the intersection points of the line segment with the
two parallel planes. (Should the original line segment be parallel to
the two planes, we have to test whether a similar function has real
roots. This is even simpler.) We parameterize the line segment with end
points $a$ and $b$ by $\ell(t) = (1-t)a + tb$, $t \in [0,1]$. Plugging
$\ell(t)$ into the algebraic equation \eqref{eq:6} of the hyperboloid
of revolution, we obtain a quadratic equation $q(t)$. The exact test
whether it has roots in $[0,1]$ is a trivialization of more advanced
algorithms \cite{rouillier04}.

\subsection*{Moving face intersects fixed vertex}

We have to test whether a point $p$ lies between the two sheets of a
hyperboloid and whether the orthographic projection of this point lies
in the scaled copy $\sigma(P)$ of the moving polyhedron. Here,
$\sigma$ is the non-uniform scaling of Equation~\eqref{eq:9}. These
two tests amount to determining the sign of a quadratic function at
$p$ and a point-in-polygon test \cite[Section 5.4.1]{ericson05}.
Again, the hyperboloids and balls for edges and vertices can be
ignored because their intersections are already handled by preceding
tests.

\section{Bounding balls for motions}
\label{sec:bounding-balls}

Our collision detection algorithm also requires the computation of a
bounding ball for a motion $c(t)$, considered as a curve in
$\SE\subset\GA=\RSet^{12}$. This space is equipped with the
object-oriented Euclidean metric derived from a mass distribution
$\mu$ in the space of the moving polygon. The problem of finding the
(unique) minimal enclosing ball for an arbitrary curve $c(t)$ in a
Euclidean space is hard. However, the requirements of our algorithm
allow simplifications that make the problem not only tractable but
allow efficient solutions in many relevant cases:
\begin{itemize}
\item We are not looking for the optimal bounding ball. A ``good''
  approximation is acceptable as well.
\item We are allowed to subdivide the curve $c(t)$ and compute
  bounding balls for the individual segments. If the curvature and
  arc-length of $c(t)$ can be bounded, after a sufficient number of
  subdivisions one would assume that the ball over the diameter
  defined by the curve end-points will bound the complete curve.
  Below, in \autoref{prop:2} we prove a slightly weaker statement. It
  gives bounding balls whose radius asymptotically equals half the
  curve length. For typical motions in robotics, bounds for arc-length
  and curvature can be obtained if such bounds are known for the joint
  angles.
\item For the important class of piecewise rational motions, the
  convex hull property allows to use the control polygon for efficient
  bounding spheres computation.
\end{itemize}

Here is a simple method to construct bounding balls for short curve
segments with small curvature.

\begin{proposition}
  \label{prop:2}
  Let $C$ be a $C^2$ curve parameterized by $c\colon [t_0,t_1] \to
  \RSet^d$ with arc-length $\ell < r\frac{\pi}{2}$ and curvature
  $\varkappa$ bounded by $\vert\varkappa\vert \le 1/r$. Define $t
  \defeq \ell/r$ and $e_1 \defeq \dot{c}(t_0)/\Norm{\dot{c}(t_0)}$.
  Then the complete curve is contained in the ball
  $\ball{m}{\ballradius}$ with radius $\ballradius = r(1-\cos t)/\sin
  t$ and center $m = c(t_0) + \ballradius e_1$.
\end{proposition}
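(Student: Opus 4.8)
The plan is to reduce the statement to a planar comparison with the circular arc of radius $r$, to which the choice of $\ballradius$ is calibrated. First I would reparameterize $C$ by arc-length and rescale so that $r=1$; the hypotheses then read $\ell=t<\pi/2$, $\Norm{\dot\gamma}\equiv 1$ and $|\varkappa|\le 1$, while $\ballradius=(1-\cos t)/\sin t=\tan(t/2)$, where $\gamma$ denotes the arc-length parameterization. Placing the origin at $c(t_0)$, so that $m=\ballradius e_1$, the claimed inclusion $\Norm{\gamma(s)-m}\le\ballradius$ is equivalent to
\[
  \Norm{\gamma(s)}^2 \le 2\ballradius\,\IP{\gamma(s)}{e_1},\qquad 0\le s\le t.
\]

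Next I would introduce the turning angle $\phi(s)$, the angle between the unit tangent $T(s)=\dot\gamma(s)$ and $e_1=T(0)$. Since $T$ moves on the unit sphere with speed $\Norm{T'}=|\varkappa|\le 1$, the geodesic-distance function $\phi$ is $1$-Lipschitz with $\phi(0)=0$, hence $0\le\phi(s)\le s\le t<\pi/2$. Writing $T(u)=\cos\phi(u)\,e_1+\sin\phi(u)\,w(u)$ with $w(u)\perp e_1$ a unit vector and integrating, one gets $\gamma(s)-m=(A(s)-\ballradius)e_1+\int_0^s\sin\phi(u)\,w(u)\,du$, where $A(s)=\int_0^s\cos\phi$ and $B(s)=\int_0^s\sin\phi$. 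Bounding the perpendicular part by the triangle inequality gives
\[
  \Norm{\gamma(s)-m}^2 \le (A(s)-\ballradius)^2+B(s)^2,
\]
so it suffices to prove $F(s)\defeq A(s)^2+B(s)^2-2\ballradius A(s)\le 0$ on $[0,t]$.

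For the extremal profile $\phi(u)=u$, the planar circular arc of radius $1$, one has $A=\sin s$, $B=1-\cos s$, hence $A^2+B^2=2(1-\cos s)=2B$ and $F(s)=2\bigl[(1-\cos s)-\ballradius\sin s\bigr]$; this is $\le 0$ precisely because $(1-\cos s)/\sin s=\tan(s/2)$ is increasing and $s\le t$, so $\tan(s/2)\le\tan(t/2)=\ballradius$. In particular the endpoint of the circular arc lies exactly on the sphere, which is how $\ballradius$ was calibrated, and the limit $\ballradius=\tan(t/2)\sim\ell/2$ recovers the ``half the curve length'' remark preceding the statement.

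The main obstacle is the passage from this extremal profile to an arbitrary admissible $\phi$. The naive bounds $A(s)\ge\sin s$ and $B(s)\le 1-\cos s$ (from $\phi(u)\le u$ and the monotonicity of $\cos,\sin$ on $[0,\pi/2]$) do \emph{not} suffice, since the two cannot be attained simultaneously and the resulting rectangle of possible $(A,B)$ protrudes from the admissible disk $A^2+B^2\le 2\ballradius A$ as $t\to\pi/2$; a genuinely joint argument is needed. I would close this gap either by invoking a Schur-type comparison theorem — among $C^2$ curves of fixed length with curvature bounded by $1$, the planar circular arc maximizes the distance of each point from a fixed center, so $F$ is largest for $\phi(u)=u$ — or by a direct first-exit argument: if $s_1\in(0,t)$ were the first parameter with $F(s_1)=0$ after an interval on which $F<0$, then necessarily $F'(s_1)\ge 0$, whereas
\[
  \tfrac12 F'(s_1)=\cos\phi(s_1)\,(A(s_1)-\ballradius)+\sin\phi(s_1)\,B(s_1).
\]
Showing that this expression is negative under the constraints $\phi(s_1)\le s_1$ and $A(s_1)^2+B(s_1)^2=2\ballradius A(s_1)$ is the technical heart of the proof; at the admissible endpoint $s_1=t$ only $F(t)\le 0$ is needed, which the circular-arc computation already supplies.
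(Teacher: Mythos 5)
Your reduction is sound as far as it goes: rescaling to $r=1$, introducing the turning angle $\phi$ (which is $1$-Lipschitz, so $0\le\phi(s)\le s$), and bounding $\Norm{\gamma(s)-m}^2\le(A(s)-\ballradius)^2+B(s)^2$ with $A=\int_0^s\cos\phi$, $B=\int_0^s\sin\phi$ is correct, as is the calibration check $F\defeq A^2+B^2-2\ballradius A\le0$ for the extremal profile $\phi(u)=u$. You also rightly observe that the separate bounds $A\ge\sin s$, $B\le 1-\cos s$ do not suffice. But the step you yourself call ``the technical heart'' is genuinely missing, and neither of your proposed routes closes it. Schur's comparison theorem controls the chord between two points \emph{of the curve}; it does not bound the distance of curve points from the external center $m$, and ``the circular arc maximizes the distance of each point from a fixed center'' is not a theorem in that form. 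Worse, the first-exit argument cannot work even in principle: at any $s_1$ with $F(s_1)=0$ write $A=\ballradius(1-\cos\theta)$, $B=\ballradius\sin\theta$ with $\theta\in(0,\pi)$, so that $\tfrac12F'(s_1)=-\ballradius\cos\bigl(\phi(s_1)+\theta\bigr)$; the constraints $A\ge\sin s_1$ and $B\le1-\cos s_1$ give $\cot(\theta/2)=B/A\le\tan(s_1/2)$, hence $\theta\ge\pi-s_1$ and $\phi(s_1)+\theta>\pi/2$, and therefore $F'(s_1)\ge0$ --- the \emph{opposite} sign of what you need for a contradiction. The circular arc shows this is unavoidable: there $F(t)=0$ and $F'(t)=2\sin t/(1+\cos t)>0$, so the extremal curve leaves the ball tangentially from inside exactly at the endpoint, and no first-order sign argument on the sphere can rule out earlier exits.

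The paper takes a different route that supplies precisely the missing ingredient. \autoref{lem:1} (quoted from Schr\"ocker--Wallner) traps the curve in the planar region bounded by the two circular arcs of radius $r$ tangent to $C$ at $c(t_0)$ together with segments of their involutes (\autoref{lem:2}; for $d>2$ this region is rotated about the initial tangent), and the proof of \autoref{prop:2} then reduces to checking that $\ball{m}{\ballradius}$ contains the involute arcs --- a computation the authors only assert as ``tedious but possible,'' but which is elementary. In your language, that check is exactly the inequality $F\le0$ for the one-switch profiles $A=\sin\tau+(s-\tau)\cos\tau$, $B=1-\cos\tau+(s-\tau)\sin\tau$ with $0\le\tau\le s\le t$ (turn maximally for time $\tau$, then go straight), which trace the involutes and form the boundary of the reachable set of $\dot z=e^{i\phi}$, $\vert\dot\phi\vert\le1$. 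To complete your argument you must either justify the reduction from arbitrary admissible $\phi$ to these extremal profiles (this is essentially what \autoref{lem:1} and \autoref{lem:2} accomplish) and then verify the involute inequality, or find a genuinely joint bound on $(A,B)$; as it stands the proof has a gap at its central step.
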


The proof of \autoref{prop:2} is based on \cite[Lemma~5]{schroecker05}
which we repeat here for convenience of the reader:

\begin{lemma}
  \label{lem:1}
  Assume that $I \subseteq (-r\frac{\pi}{2},r\frac{\pi}{2})$ is an
  interval with $0 \in I$ and $c\colon I \to \RSet^d$ is an arc length
  parameterization of a $C^2$ curve $C$ with curvature $\varkappa$
  such that $\vert\varkappa\vert \le 1/r$. Then there is a
  parameterization $\overline{c}(u) = ue_1 + f(u) \cdot n(u)$ of $C$
  such that $e_1= \dot{c}(0)$, $n(u)$ is a unit vector orthogonal to
  $e_1$, $\vert u(t) \vert \ge r \sin(t/r)$ and $\vert f(u) \vert \le
  r - \sqrt{r^2-u^2}$.
\end{lemma}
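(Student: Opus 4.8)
The plan is to build the parameterization $\overline{c}$ by orthogonally decomposing the curve relative to its initial tangent, and then to compare $C$ against the circular arc of radius $r$, which realizes every asserted inequality with equality. Concretely, set $e_1 \defeq \dot{c}(0)$ (a unit vector, since $c$ is parameterized by arc length) and, for each $t \in I$, write $c(t)-c(0) = u(t)\,e_1 + w(t)$ with $u(t) \defeq \IP{c(t)-c(0)}{e_1}$ and $w(t) \perp e_1$. I then define $f(t) \defeq \Norm{w(t)}$ and $n(t) \defeq w(t)/\Norm{w(t)}$ (choosing $n$ arbitrarily where $w$ vanishes). Once $u$ is shown to be a strictly monotone function of $t$, inverting it yields the desired form $\overline{c}(u) = u\,e_1 + f(u)\,n(u)$.

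The key estimate, and the main obstacle, is to control the angle $\phi(t) \defeq \angle(\dot{c}(t), e_1)$ by the curvature bound. Because $c$ is an arc-length parameterization, $\dot{c}(t)$ is a unit vector and $\ddot{c}(t)$ is its principal normal scaled by $\varkappa$, so $\Norm{\ddot{c}(t)} = \vert\varkappa(t)\vert \le 1/r$. Viewing $\dot{c}$ as a curve on the unit sphere $S^{d-1}$, its speed is exactly $\Norm{\ddot{c}} \le 1/r$; since the geodesic distance on $S^{d-1}$ from the fixed point $e_1 = \dot{c}(0)$ equals the angle $\phi$, it cannot exceed the traversed path length, giving $\phi(t) \le \int_0^{\vert t\vert}\vert\varkappa\vert\,\mathrm{d}s \le \vert t\vert / r$. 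As $I \subseteq (-r\tfrac{\pi}{2}, r\tfrac{\pi}{2})$, this forces $\phi(t) < \pi/2$ throughout. Making this spherical comparison rigorous in arbitrary dimension $d$ (rather than in the plane, where $\phi$ is simply a signed integral of $\varkappa$) is the delicate point.

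With the angle bound in hand, both assertions follow from elementary integral comparisons against $\cos(s/r)$ and $\sin(s/r)$. First, $\dot{u}(t) = \IP{\dot{c}(t)}{e_1} = \cos\phi(t) \ge \cos(\vert t\vert/r) > 0$, so $u$ is strictly increasing (hence a legitimate parameter) and, for $t \ge 0$,
\[
u(t) = \int_0^t \cos\phi(s)\,\mathrm{d}s \ge \int_0^t \cos(s/r)\,\mathrm{d}s = r\sin(t/r),
\]
which yields $\vert u(t)\vert \ge r\sin(\vert t\vert/r)$ after the symmetric computation for $t < 0$. Second, the orthogonal velocity $\dot{w}(t) = \dot{c}(t) - \cos\phi(t)\,e_1$ has norm $\sin\phi(t) \le \sin(\vert t\vert/r)$, so with $w(0) = 0$,
\[
f(t) = \Norm{w(t)} \le \int_0^{\vert t\vert}\Norm{\dot{w}(s)}\,\mathrm{d}s \le \int_0^{\vert t\vert}\sin(s/r)\,\mathrm{d}s = r\bigl(1-\cos(\vert t\vert/r)\bigr).
\]

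It remains to combine the two bounds. From $\vert u\vert \ge r\sin(\vert t\vert/r)$ we get $r^2 - u^2 \le r^2\cos^2(\vert t\vert/r)$; on the range $\vert u\vert \le r$ where the right-hand side is meaningful, this gives $\sqrt{r^2-u^2} \le r\cos(\vert t\vert/r)$, so $r - \sqrt{r^2-u^2} \ge r\bigl(1-\cos(\vert t\vert/r)\bigr) \ge f$, which is exactly the claimed estimate $\vert f(u)\vert \le r - \sqrt{r^2-u^2}$. Both inequalities are sharp: the circular arc $s \mapsto (r\sin(s/r),\, r(1-\cos(s/r)))$ has constant curvature $1/r$ and satisfies $u = r\sin(s/r)$ and $f = r - \sqrt{r^2 - u^2}$, confirming that it is the extremal curve governing the estimate.
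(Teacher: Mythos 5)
Your proof is correct, but there is no internal proof to compare it against: the paper does not prove \autoref{lem:1} at all, it imports the statement verbatim from Lemma~5 of \cite{schroecker05} ``for convenience of the reader'' and only proves the consequences (\autoref{lem:2} and \autoref{prop:2}). Judged on its own, your argument is complete and is the natural one. The decomposition $c(t)-c(0)=u(t)e_1+w(t)$ with $w\perp e_1$ is exactly the form the statement requires, and the one genuinely delicate point --- controlling the angle $\phi(t)=\angle(\dot{c}(t),e_1)$ in arbitrary dimension $d$ --- is handled cleanly and rigorously by your tangent-indicatrix argument: $\dot{c}$ traces a path on the unit sphere $S^{d-1}$ with speed $\Norm{\ddot{c}}=\vert\varkappa\vert\le 1/r$, and since geodesic distance from $e_1$ on the sphere equals the angle and is bounded by path length, $\phi(t)\le\vert t\vert/r<\pi/2$. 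This dimension-free comparison replaces the planar ``signed integral of curvature'' reasoning and needs no further justification. The remaining steps are sound: $\dot{u}=\cos\phi\ge\cos(\vert t\vert/r)>0$ makes $u$ a legitimate strictly monotone parameter and gives $\vert u\vert\ge r\sin(\vert t\vert/r)$; $\Norm{\dot{w}}=\sin\phi\le\sin(\vert t\vert/r)$ gives $f\le r(1-\cos(\vert t\vert/r))$; and chaining these via $u^2\ge r^2\sin^2(\vert t\vert/r)$ yields $f\le r-\sqrt{r^2-u^2}$ on $\vert u\vert\le r$. Two small points you already half-address would merit one explicit sentence in a polished write-up: where $w(t)=0$ the vector $n$ is chosen arbitrarily, which is harmless since the lemma claims no continuity of $n$; and since $\vert u(t)\vert\le\vert t\vert$ can exceed $r$ when $\vert t\vert>r$, the final inequality should be read as holding where its right-hand side is defined --- geometrically, the curve stays in the exterior of the two radius-$r$ circles tangent to $e_1$ at the origin, and points with $\vert u\vert>r$ satisfy this automatically, which is precisely how the bound is consumed in \autoref{lem:2}, where the involutes take over beyond those circles.
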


\begin{figure}
    \centering
    \includegraphics{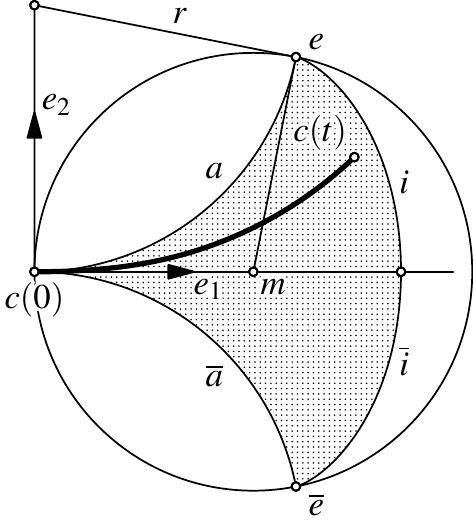}
    \caption{Bounding ball for curve $c(t)$ with bounded arc
      length and curvature.}
    \label{fig:involute}
\end{figure}

This lemma gives us a bounding volume for the curve $C$:
\begin{lemma}
  \label{lem:2}
  We use the same notation and assumptions as in \autoref{lem:1}. In
  case $d=2$, the curve $C^\star$ parameterized by the restriction of
  $c$ to $[0,t]$ is contained in the area bounded by
  \begin{itemize}
  \item the circular arcs $a$, $\overline{a}$ of radius $r$ and length
    $t$ which start at $c(0)$ and share the oriented tangent with
    $C^\star$ in this point and
  \item the segments $i$, $\overline{i}$ of involutes to these
    circular arcs which connect the point $c(0)+te_1$ with the arc end
    points $e$,~$\overline{e}$
  \end{itemize}
  (\autoref{fig:involute}). In case of $d > 2$, the curve $C^\star$ is
  contained in the volume obtained by rotating this area about the
  curve tangent in~$c(0)$.
\end{lemma}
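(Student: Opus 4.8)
The plan is to use an arc-length parameterization $c\colon[0,t]\to\RSet^d$ with $c(0)$ at the origin and $e_1=\dot c(0)$ along the axis, and to treat all dimensions uniformly by recording only the forward coordinate $u(s)\defeq\IP{c(s)}{e_1}$ and the lateral distance $\rho(s)\defeq\Norm{c(s)-u(s)e_1}$; the case $d=2$ is then the degenerate case in which the solid of revolution is the planar region bounded by the two arcs and the two involutes. First I would record the turning bound: since $|\varkappa|\le 1/r$, the unit tangent moves on the unit sphere with speed at most $1/r$, so the angle $\angle(\dot c(\sigma),e_1)$ is at most $\sigma/r\le t/r<\pi/2$. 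Integrating $\IP{\dot c}{e_1}=\cos\angle(\dot c,e_1)$ shows that $u$ is strictly increasing with $u(s)\ge r\sin(s/r)$ and $u\in[0,t]$ (the bound already contained in \autoref{lem:1}), while \autoref{lem:1} supplies the lateral bound $\rho(s)\le r-\sqrt{r^2-u(s)^2}$, i.e. $c(s)$ lies inside the solid of revolution of the circle through the origin tangent to $e_1$. This single inequality already confirms containment under the arcs $a,\overline a$ in the range $u\le r\sin(t/r)$, where these arcs form the boundary.

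The heart of the argument is the bound towards the tip $c(0)+te_1$, which I would phrase as a family of supporting half-spaces of the involute. For a unit vector $w$ enclosing the angle $\phi\in[0,t/r]$ with $e_1$ I would estimate $\IP{c(s)}{w}=\int_0^s\cos\angle(\dot c(\sigma),w)\,\mathrm{d}\sigma$. The spherical triangle inequality gives $\angle(\dot c(\sigma),w)\ge\max\{0,\phi-\sigma/r\}$, and since the cosine is decreasing on $[0,\pi]$ this yields the pointwise comparison $\cos\angle(\dot c(\sigma),w)\le\cos(\phi-\min\{\sigma/r,\phi\})$ against the extremal control that turns at maximal rate until the tangent points along $w$ and then proceeds straight. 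Because this comparison integrand is nonnegative on $[0,t]$, enlarging the domain from $[0,s]$ to $[0,t]$ only increases the integral, and evaluating the extremal curve — which is exactly the arc-then-straight motion $\phi\mapsto P(\phi)+(t-r\phi)T(\phi)$ with $P(\phi)=(r\sin\phi,r(1-\cos\phi))$ and $T(\phi)=(\cos\phi,\sin\phi)$ — gives $\IP{c(s)}{w}\le t-r(\phi-\sin\phi)$, precisely the statement that $c(s)$ lies on the inner side of the tangent plane of the involute at parameter $\phi$.

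It then remains to identify these conditions with the claimed region. I would verify that the extremal curve above parameterizes the involute of the arc: its endpoint at $\phi=t/r$ is the arc endpoint $e$, its endpoint at $\phi=0$ is the tip $(t,0)=c(0)+te_1$, the segment length $t-r\phi$ matches the unwound arc, and the involute tangent at parameter $\phi$ is orthogonal to $T(\phi)$, so $\IP{c(s)}{w}\le t-r(\phi-\sin\phi)$ for all $\phi$ is exactly membership in the intersection of the involute's tangent half-spaces. As the involute is convex, this intersection is its concave side, which for $u\ge r\sin(t/r)$ is the sub-involute region; combined with the circular-arc bound from \autoref{lem:1} for $u\le r\sin(t/r)$ and with $u\in[0,t]$, this places $c(s)$ in the region bounded by $a,\overline a$ and $i,\overline i$. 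Letting $w$ range over the whole cone of unit vectors at angle $\le t/r$ from $e_1$ produces the solid of revolution for $d>2$, while in the case $d=2$ the two choices $w$ at angle $\pm\phi$ give the upper and lower involutes and hence the planar region.

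The step I expect to be delicate is the passage from the projection inequalities to the region itself: one must confirm that the two boundary pieces — the circle, supplied by \autoref{lem:1}, and the involute, supplied by the projection estimate — join exactly at $e$, and that no point of $C^\star$ slips between them, which rests on the monotonicity of $u$ and on recognizing the arc-then-straight extremal as the generating curve of the involute. The cosine comparison itself is elementary once the turning bound $\angle(\dot c(\sigma),e_1)\le\sigma/r$ is in place; the care lies in keeping the orientation of the supporting half-spaces consistent and in handling the lower boundary by the symmetric choice of $w$.
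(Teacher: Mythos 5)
Your proposal is correct and follows the same decomposition as the paper's own proof: the circle bound from \autoref{lem:1} confines the curve near $c(0)$, the arc-length budget yields the involute bound towards the tip, and the case $d>2$ reduces to the planar picture by rotational symmetry about the tangent at $c(0)$. The only difference is one of rigor, in your favor: where the paper simply asserts that ``because of the bounded arc length $t$, the curve will not go beyond the circular involutes,'' your supporting-half-space comparison $\IP{c(s)}{w}\le t-r(\phi-\sin\phi)$ against the arc-then-straight extremal actually proves that step.
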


\begin{proof}
  We consider the case $d = 2$ first. \autoref{lem:1} implies that the
  curve point $c(t)$ lies in the exterior of the two circles through
  $a$, $\overline{a}$. Because of the bounded arc length $t$, the
  curve will not go beyond the circular involutes $i$, $\overline{i}$.
  If $d > 2$, the claim follows by considering the graph of the
  function $f(u)$ in the plane spanned by $e_1$ and $e_2 = n(0)$.
\end{proof}

\begin{proof}[Proof of \autoref{prop:2}]
  The curve $C$ has an arc-length parameterization that satisfies the
  criteria of \autoref{lem:1}. Hence, we can enclose it in the volume
  given by \autoref{lem:2}. In order to prove the proposition, we have
  to show that the ball $\ball{m}{k}$ contains this volume. Center $m$
  and radius $k$ are chosen such that the start points $c(t_0)$, $e$
  and $\overline{e}$ of curve and involutes lie on the ball boundary.
  It is a little tedious but possible to verify that it also contains
  the involute arcs $i$,~$\overline{i}$ for any value $\ell \in [0,
  r\frac{\pi}{2}]$ (\autoref{fig:involute}).
\end{proof}

\section{Collision detection algorithm}
\label{sec:collision-detection}

Now we sketch an algorithm for collision detection by means of
toleranced motions. We describe only the case of collisions between a
fixed polygon $F$ and a moving polygon $P$, undergoing a motion
$c(t)$. These polygons are assumed to be simple and we think of them
as being composed of vertices, edges, and one face. For the actual
implementation, it is advantageous to have access to lists of vertices
and edges. Collision detection for the case of polyhedra can be
reduced to this situation.

The basic idea of motion tolerancing is to cover a given
one-parametric motion by a sequence of full-dimensional motions and
then test the orbits of geometric entities (balls, straight-line
segments, polygons) with respect to these full-dimensional motions for
collision. If no collision occurs, it can also be excluded for the
originally given motion. If this is not the case, the sequence of
covering motions needs to be refined.

The motion $c(t)$ is a curve on $\SE \subset \GA = \RSet^{12}$,
defined over an interval $I = [t_0,t_1]$. We assume that we can
compute a ``small'' bounding ball for the restriction of $c(t)$ to any
sub-interval of $I$. Here, the words ``ball'' and ``small'' refer to
the object-oriented Euclidean metric in $\RSet^{12}$, induced via a
positive Borel measure $\mu$ in the space of the moving polygon
(\autoref{sec:distance}).

We start our algorithm by a pre-processing step where we compute for
every edge $e_i$ of the moving polygon $P$ and for the plane $\xi$ of
$P$ the metric data as defined by \eqref{eq:5} and \eqref{eq:7}. This
gives us a pair $(\metdata_i,\varrho_{0,i})$ for every edge and a
triple $(\metdata'_1,\metdata'_2,\rho_0)$ for~$\xi$. From a
computational point of view, this amounts to the principal axis
transform of a one- or two-dimensional quadric. Transcendental
functions are involved only at this stage of the algorithm. Since the
metric data depends only on the moving polygon $P$ and the Borel
measure $\mu$, it can be re-used for different motions $c(t)$ and
different fixed polygons~$F$.

In the algorithm itself, we construct a bounding ball $\ball{g}{R}$ of
$c(t)$ and test the toleranced moving polygon $P$ with respect to this
ball for collision with the fixed polygon $F$. The necessary tests are
described in Section~\ref{sec:collision-tests}. We perform them with
respect to a certain ``zero-position'' of $P$ in the plane with
equation $z = 0$. This requires the prior transformation of $F$ via
$g^{-1}$. In return for the necessary inversion of a $3 \times 3$
matrix we get an easier and cleaner implementation. If the orbit of
$P$ with respect to $\ball{g}{R}$ intersects $F$, we recursively
subdivide the motion $c(t)$ and repeat the collision test. If a
``no-collision'' predicate cannot be given before the termination
criterion, the algorithm returns a time interval of possible
collisions. Otherwise, if no collision occurs, it returns an empty
interval. It is possible to add information on the location and type
of collision. Moreover, the sub-tasks obtained from the repeated
subdivision of $c(t)$ can be solved in parallel.

We suggest to terminate this recursive algorithm as soon as the
product
\begin{equation}
  \label{eq:11}
  R_{\min} \defeq R\max_{x \in P}\varrho(x)
\end{equation}
of current bounding ball radius $R$ and maximal distortion rate
attained inside the moving polygon is smaller than a pre-defined value
$\varepsilon > 0$. Pre-computing this value amounts to solving a
quadratic program. With this termination criterion only sub-motions
that bring the two polygons within a distance of $\varepsilon$ to each
other remain without a no-collision guarantee. Given the limited
accuracy of robot motion, this is probably desirable anyway.

\begin{algorithm}
  \caption{Collision detection; returns list of possibly colliding
    elements or empty list in case of no collision.}
  \label{alg:1}
  \begin{algorithmic}
    \Procedure{Collision}{$P, F, c, d$}
    \Comment{moving polygon $P$, fixed polygon $F$,\\\hfill motion $c$,
      recursion depth $d$}
    \If{$d = 0$}
      \State \Return parameter interval of $c$
      \Comment too many recursions
    \EndIf
    \State $g, R \gets \Call{BoundingBall}{c}$ \Comment center $g$, radius $R$
    \If{$R < R_{\min}$} \Comment{see Equation~\eqref{eq:11}}
      \State \Return parameter interval of $c$
      \Comment radius too small
    \EndIf
    \If{$\Call{Intersection}{P, F, g, R}$}
      \Comment{see Listing~\ref{alg:2} below}
      \State $c', c'' \gets \Call{Subdivide}{c}$
      \State \Return \Call{Collision}{$P, F, c' , d-1$}
      \State \Return \Call{Collision}{$P, F, c'', d-1$}
    \Else
      \State \Return $[]$ (empty list)
      \Comment{no collision}
    \EndIf
    \EndProcedure
  \end{algorithmic}
\end{algorithm}
\begin{algorithm}
  \caption{Intersection of fat moving polygon, determined by moving
    polygon and ball $\ball{g}{R}\subset\RSet^{12}$, and fixed
    polygon.}
  \label{alg:2}
  \begin{algorithmic}
    \Procedure{Intersection}{$P, F, g, R$}
    \Comment{moving polygon $P$, fixed polygon $F$,\\\hfill
      center $g$ and radius $R$}
    \Require{Metric data of plane of $P$ and edges of $P$ has been
      pre-computed and is accessible from within this procedure.}
    \State $P' \gets \Call{FatPolygon}{\mathrm{id}, R}$
    \Comment{fat polygon in ``zero position''}
    \State $F' \gets g^{-1}(F)$
    \For{each fat vertex $x$ of $P'$}
    \Comment{ball intersects polygon}
      \If{$x$ intersects $F'$}
        \State \Return true
      \EndIf
    \EndFor
    \For{each fat edge $e$ of $P'$}
      \Comment{line segment intersects hyperboloid}
      \For{each edge $f$ of $F'$}
        \If{$e$ intersects $f$}
          \State \Return true
        \EndIf
      \EndFor
    \EndFor
    \For{each vertex $y$ of $F'$}
      \Comment{point in hyperboloid, point projection in polygon}
      \If{$y$ lies in fat face of $P'$}
        \State \Return true
      \EndIf
    \EndFor
    \State \Return false
    \EndProcedure
  \end{algorithmic}
\end{algorithm}

The complete algorithm in pseudo-code is displayed in
Listings~\ref{alg:1} and \ref{alg:2}. The first call of procedure
\Call{Collision}{$P,F,c,d$} should have the complete vertex, edge
and face lists of both, moving and fixed polygon as arguments. A limit
on the recursion depth is good programming practice but, from a
theoretical point of view, not necessary.

\begin{remark}
  \label{rem:1}
  At this point we would like to mention one implementation detail we
  have not explained so far: It is possible that the radius $R$ of the
  bounding ball $\ball{g}{R}$ is so large that the fat edge or face is
  no longer bounded by a hyperboloid, compare Equations~\eqref{eq:6}
  and \eqref{eq:8}. The simplest method to circumvent this problem is
  to subdivide the motion and thus decrease the ball radius $R$. But
  it is also possible to exploit this behavior for simplified
  collision tests:
  \begin{itemize}
  \item If the inequality $R \ge \metdata_i$ holds for edge $e_i$, the
    fat edge is contained in the union of the two fat points over its
    end-points. Thus, we may be able to exclude a collision by testing
    for the intersection of a line segment with balls
    \cite[Section~5.3.2]{ericson05}.
  \item If $\max\{\metdata'_1,\metdata'_2\} \le R$, the fat vertices
    of the moving polygon envelope the complete fat polygon. The
    collision between $P$ and $F$ can be tested by solely checking
    possible intersections of $F$ with these vertex balls
    \cite[Section~5.2.8]{ericson05}.
  \item If either $\metdata'_1 \le R \le \metdata'_2$ or $\metdata'_2
    \le R \le \metdata'_1$ hold, the vertex spheres not necessarily
    envelope the toleranced position of $F$. By the Cauchy interlacing
    theorem \cite{hwang04}, there exist two straight lines in the
    plane of $P$ through the point of minimal distortion rate that
    divide the plane into four regions. In the interior of each of
    these regions, the situation is either as in the case
    $\metdata'_1,\metdata'_2 \le R$ or as in the case
    $\metdata'_1,\metdata'_2 \ge R$ and can be handled accordingly.
    Should $P$ overlap two or more regions we may subdivide~$P$.
  \end{itemize}
\end{remark}

\begin{remark}
  Finally, we would like to point out a possibility to considerably
  speed up the algorithm if the recursive subdivision and bounding
  ball construction for the motion $c(t)$ produces a \emph{nested}
  ball covering, that is, a ball covering where the balls after
  subdivision are contained in the ball before subdivision. In this
  case, it is sufficient to test only those elements for collision for
  which a collision at a preceding recursion level could not be
  excluded. It is possible to incorporate this in the presented
  algorithm by maintaining lists of vertices, edges and faces that
  still require collision testing. Methods for generating nested ball
  coverings are available \cite{quinlan94}. Their applicability in our
  algorithm depends on the type of motion we consider.
\end{remark}

\section{Examples and timings}
\label{sec:examples}

In this section, we describe a couple of examples in more detail. The
first group of examples considers a single fixed and a single moving
polygon that undergoes two different motions, one with collision, the
other without. The second group of examples is designed to demonstrate
the asymptotic behavior of our algorithm. We consider the collision
of two squares after subdivision into $n^2$ sub-squares.

The reader will notice that we always consider convex polygons. This
simplifies the implementation, in particular the collision test of
moving vertex and fixed face. Non-convex faces can always be
decomposed into convex parts. The impact on the overall performance
should be small.

For the computation of examples we used a personal computer with an
Intel Core 2 Duo CPU with 3000 MHz, 8 GB system memory and a 64-bit
Ubuntu 14.04. Programming language for this simple implementation is
C++, the code was compiled using g++ in version 4.8.2.

\subsection{Collision of one fixed and one moving polygon}
\label{sec:collision-one-one}

In this section we consider two different motions of a moving polygon
$P$ and its possible collisions with a fixed polygon $F$ (see
Figures~\ref{fig:showmotion-ni} and \ref{fig:showmotion-i}). The
polygons $P$ and $F$ are convex octagons. During the first motion, the
two polygons do not collide, during the second motion, they do. In
both cases, the motion is rational of degree eight. In fact, the
motions differ only by a constant translation. The Borel measure is
derived from point masses of equal weight in the vertices of the
moving polygon~$P$ and in a point over the barycenter of $P$ with
comparably small distance to the plane of $P$. This ensures that a
scaled copy of the inertia ellipsoid roughly captures the shape
of~$P$.

The motion is a rational Bézier curve $c(t)$ of degree eight in $\GA =
R^{12}$ with positive weights. We use the curve point $g \defeq
c(0.5)$ as center of a bounding ball $\ball{g}{R}$ and choose the
radius $R$ as the maximal distance to a control point. Should this
ball be too large to exclude collisions, we subdivide the curve by
means of de Casteljau's algorithm, compute bounding balls for both
parts and test again for collisions. This procedure we repeat until we
can exclude a collision or until we reach the termination criterion
described in \autoref{sec:collision-detection}.

It would be possible to use different strategies for computing
bounding balls, for example the minimal bounding ball of the control
polygon \cite{welzl91} or an approach inspired by \autoref{prop:2}. We
don't expect much difference but remark that neither approach is
guaranteed to produce a nested ball covering. Thus, we cannot profit
from pruning away non-colliding objects from previous stages. We did
not implement a special handling for balls of large radius as
explained in Remark~\ref{rem:1} but simply subdivide in this case.

In the first example, a collision can be excluded. The total
computation time minus the time for pre-processing is 0.1427 seconds.
The maximal recursion depth is eight but only 173 collision tests
polygon vs. fat polygon are performed. From
\autoref{fig:showmotion-ni} we can infer that excluding a collision in
this example is rather difficult because the moving polygon is always
``near'' the fixed polygon. If we translate the moving polygon by
about half of its diameter away from the fixed polygon, the
computation is roughly half as long.

In the second example, the two polygons interfere during a time
interval. We require a maximum of eight recursions and 0.0075 seconds
until we reach the termination criterion. The value of $\varepsilon$
is set to about one percent of the polygon diameters. A total of nine
collision tests polygon vs. fat polygon are performed.

\begin{figure}
  \begin{minipage}[b]{0.5\linewidth}
    \centering
    \includegraphics{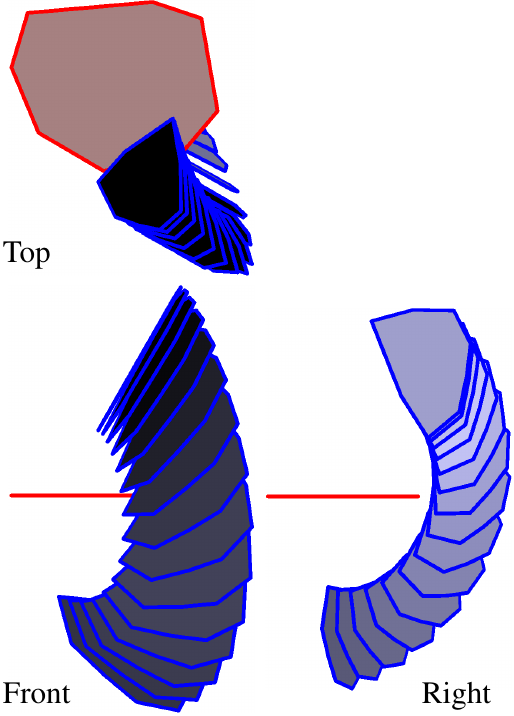}
    \caption{Motion of $P$ in Example 1}
    \label{fig:showmotion-ni}
  \end{minipage}%
  \begin{minipage}[b]{0.5\linewidth}
    \centering
    \includegraphics{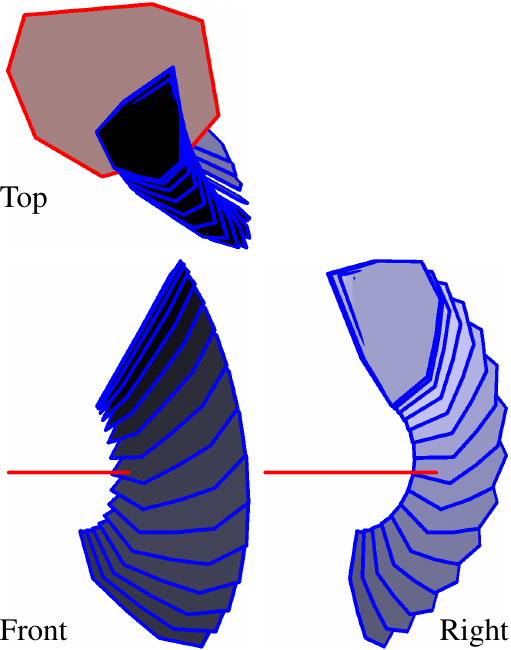}
    \caption{Motion of $P$ in Example~2}
    \label{fig:showmotion-i}
  \end{minipage}
\end{figure}

\subsection{Experiments on asymptotic behavior}
\label{sec:asymptotic-behavior}

Now we consider the potential collision of two polyhedral surfaces
that were obtained from two squares by subdivision into $n^2$
sub-squares. Thus, we have $n^4$ collision tests of moving vs. fixed
polygon. Of course this example is artificial but the algorithm does
not profit at all from the regular structure so that conclusions for
large polyhedral surfaces in general seem justified.

The two subdivided squares undergo three different rational motions of
degree eight: One with a collision, one with a near-collision and one
with a ``clear'' no-collision. The rotational component of these
motions is always the same, they differ only in a constant
translation. We run the algorithm for $1 \le n \le 10$ until we have a
no-collision guarantee or reach the termination criterion of the
previous examples. The recorded data is the total time (average over
100 runs), the total time per moving polyhedron (that is, collision of
one moving square with $n^2$ fixed squares) and the average time of
one collision test square vs. square.

Finally, we distinguish between two different strategies: We either
use one global mass distribution for the complete moving polyhedral
surface or we compute an individual mass distribution for each moving
square. The aim is to explore whether the additional computational
burden of the latter strategy is compensated by the faster detection
of a non-collision.

\begin{table}
  \centering
  \begin{tabular}{|r|rr|rr|rr|}
    \hline
    & \multicolumn{2}{c|}{Collision}
    & \multicolumn{2}{c|}{Near-Collision}
    & \multicolumn{2}{c|}{No-Collision} \\
    $n$ & total [s] & average [s] & total [s] & average [s] & total [s] & average [s] \\
    \hline
1 & 0.17 & 0.17 & 0.49 & 0.49 & 0.11 & 0.11 \\
2 & 1.58 & 0.10 & 2.43 & 0.15 & 1.12 & 0.07 \\
3 & 5.91 & 0.07 & 9.49 & 0.12 & 4.57 & 0.06 \\
4 & 16.07 & 0.06 & 25.78 & 0.10 & 12.17 & 0.05 \\
5 & 35.09 & 0.06 & 57.05 & 0.09 & 27.60 & 0.04 \\
6 & 67.41 & 0.05 & 110.34 & 0.09 & 54.82 & 0.04 \\
7 & 118.42 & 0.05 & 193.47 & 0.08 & 98.25 & 0.04 \\
8 & 194.16 & 0.05 & 318.39 & 0.08 & 164.00 & 0.04 \\
9 & 300.88 & 0.05 & 494.69 & 0.07 & 258.29 & 0.04 \\
10 & 448.09 & 0.04 & 738.22 & 0.07 & 388.67 & 0.04 \\
    \hline
  \end{tabular}\\
  \caption{Test data for collision, near-collision and no-collision
    with a global mass distribution}
  \label{tab:1}
\end{table}

\begin{table}
  \centering
  \begin{tabular}{|r|rr|rr|rr|}
    \hline
    & \multicolumn{2}{c|}{Collision}
    & \multicolumn{2}{c|}{Near-Collision}
    & \multicolumn{2}{c|}{No-Collision} \\
    $n$ & total [s] & average [s] & total [s] & average [s] & total [s] & average [s] \\
    \hline
1 & 0.17 & 0.17 & 0.49 & 0.49 & 0.11 & 0.11 \\
2 & 1.21 & 0.07 & 1.85 & 0.12 & 0.98 & 0.06 \\
3 & 3.68 & 0.04 & 5.99 & 0.07 & 3.22 & 0.04 \\
4 & 9.12 & 0.04 & 15.23 & 0.06 & 8.27 & 0.03 \\
5 & 19.01 & 0.03 & 32.59 & 0.05 & 18.74 & 0.03 \\
6 & 35.69 & 0.03 & 62.36 & 0.05 & 36.97 & 0.03 \\
7 & 61.59 & 0.03 & 108.64 & 0.04 & 66.01 & 0.03 \\
8 & 99.71 & 0.02 & 177.00 & 0.04 & 109.79 & 0.03 \\
9 & 153.22 & 0.02 & 274.48 & 0.04 & 172.29 & 0.03 \\
10 & 226.60 & 0.02 & 407.89 & 0.04 & 258.99 & 0.03 \\
    \hline
  \end{tabular}\\
  \caption{Test data for collision, near-collision and no-collision
    with individual mass distributions}
  \label{tab:2}
\end{table}

The data for total and average time is displayed in Table~\ref{tab:1}
for a global mass distribution and in Table~\ref{tab:2} for individual
mass distributions. The total time per moving polygon is visualized in
Figure~\ref{fig:time-per-moving-square} for a global mass distribution
and in Figure~\ref{fig:time-per-moving-square-2} for individual mass
distributions. In these figures, we plot the time per moving square as
a histogram over the moving squares. The height of each box is
proportional to the time for excluding a collision or reaching the
termination criterion without a no-collision guarantee. The latter
case is indicated by a colored box. We see that collisions or
near-collisions occur only at one corner and observe an increase of
computation time in the vicinity of this corner.

\begin{figure}
  \centering
  \newcommand{\igh}[1]{\begin{minipage}{0.18\linewidth}\centering\includegraphics[width=\linewidth]{hist_#1}\end{minipage}}
  \begin{tabular}{c|cccc}
                    & $n=1$           & $n=4$           & $n=7$           & $n=10$                                             \\
   \small collision &                 &                 &                 &                                                    \\
   \hline
   \small yes       & \igh{hit_0}     & \igh{hit_3}     & \igh{hit_6}     & \igh{hit_9}                                        \\
   \small near      & \igh{nearhit_0} & \igh{nearhit_3} & \igh{nearhit_6} & \igh{nearhit_9}                                    \\
   \small no        & \igh{miss_0}    & \igh{miss_3}    & \igh{miss_6}    & \igh{miss_9}
  \end{tabular}
  \caption{Time per moving square with a global mass distribution}
  \label{fig:time-per-moving-square}
\end{figure}

\begin{figure}
  \centering
  \newcommand{\igm}[1]{\begin{minipage}{0.18\linewidth}\centering\includegraphics[width=\linewidth]{hist-massdist_#1}\end{minipage}}
  \begin{tabular}{c|cccc}
                    & $n=1$           & $n=4$           & $n=7$           & $n=10$                                                                        \\
   \small collision &                 &                 &                 &                                                                               \\
   \hline
   \small yes       & \igm{hit_0}     & \igm{hit_3}     & \igm{hit_6}     & \igm{hit_9}                                                                   \\
   \small near      & \igm{nearhit_0} & \igm{nearhit_3} & \igm{nearhit_6} & \igm{nearhit_9}                                                               \\
   \small no        & \igm{miss_0}    & \igm{miss_3}    & \igm{miss_6}    & \igm{miss_9}
  \end{tabular}
  \caption{Time per moving square with individual mass distributions}
  \label{fig:time-per-moving-square-2}
\end{figure}

The examples of this section allow to draw the following conclusions:
\begin{itemize}
\item The asymptotic behavior is linear in the total number $n^4$ of
  collision tests. This is to be expected since large $n$ means also a
  large number of no-collisions. The total time roughly equals the
  average time for the exclusion of a collision times the number of
  intersection tests for two polygons.
\item It is better to use individual mass distributions, in particular
  in the presence of collisions. The reason is that the time for
  computing the mass distribution is negligible but very often speeds
  up the detection of a non-collision. The ratio of total times is
  between approximately 1.5 in case of no collision and 2.0 in case of
  a collision. This is probably because many near-collisions in the
  latter case are recognized as no-collisions at a lower recursion
  depth.
\item The algorithm can greatly profit from a hierarchy of bounding
  polyhedra. This is nicely illustrated by
  Figures~\ref{fig:time-per-moving-square} and
  \ref{fig:time-per-moving-square-2}. In order to give a no-collision
  guarantee, it is sufficient to subdivide only the squares that, at a
  given recursion depth, correspond to colored boxes. At least in our
  examples, large parts of the polyhedral surface can be pruned away
  at low recursion depths.
\end{itemize}

\section{Concluding remarks}
\label{sec:concluding-remarks}

In this paper we exploited the simple orbit shape of points, lines,
planes and polygons with respect to balls $\ball{g}{R} \subset
\RSet^{12}$ of affine displacements for an efficient collision
detection algorithm. Emphasis is put on reliable no-collision
guarantee, simple implementation and fast detection of intervals with
no collision. A prototype implementation demonstrates that real-time
collision detection for reasonably simple polyhedral shapes undergoing
a rational motion is possible, provided there are few expected
collisions or near-collisions.

Our algorithm works with affine deformations while for many of the
envisaged applications just rigid body displacements are required.
Thus, one might wonder whether something could be gained by a
restriction to Euclidean motions. However, the tolerancing analysis in
\cite{schroecker05} gives no evidence for this believe. Linearization of
$\SE \subset \GA$ and subsequent error estimation comes at the cost of
more complicated orbit shapes. For example, the orbits of points with
respect to the intersection of a ball with $\SE$ are contained in
offsets to ellipsoids. Collision tests with these surfaces are so much
harder than with balls that one is certainly willing to accept more
recursions instead.

A necessary ingredient in our algorithm is the possibility to
efficiently generate bounding balls for curve segments. We already
remarked that for sufficiently short curves of small curvature one can
probably use the ball for which the curve end points are antipodal. It
would be good to have a more precise statement. Moreover, we emphasize
the need for efficient algorithms that generate nested ball
coverings for motions of technical relevance, in particular piecewise
rational motions. This is topic of future research.

\section*{Acknowledgments}
\label{sec:acknowledgments}

This research was supported by the Austrian Science Fund (FWF): 
P23831-N13.





\end{document}